\theoremstyle{plain}
\newtheorem{proposition}{Proposition}
\newtheorem{corollary}{Corollary}
\newtheorem{theorem}{Theorem}
\theoremstyle{definition}
\newtheorem{remark}{Remark}
\newtheorem{definition}{definition}
\renewcommand{\aa}{\mathcal{A}}
\newcommand{\cl}{\mathcal{L}}
\newcommand{\bb}{\mathcal{B}}
\newcommand{\ee}{\mathcal{E}}
\newcommand{\ff}{\mathcal{F}}
\newcommand{\hh}{\mathcal{H}}
\newcommand{\de}{{\rm d}}
\newcommand{\lh}{\mathcal{L}(\mathcal{H})} 
\newcommand{\trh}{\mathcal{I}_1 (\mathcal{H})} 
\newcommand{\trhs}{\mathcal{I}_1^S (\mathcal{H})} 
\newcommand{\HS}{\mathcal{I}_2(\mathcal{H})} 
\newcommand{\sh}{\mathcal{S(H)}} 
\newcommand{\shs}{\mathcal{S}^S (\mathcal{H})} 
\newcommand{\shuno}{\mathcal{S}^1 (\mathcal{H})} 
\newcommand{\no}[1]{\left\|#1\right\|} 
\newcommand{\tr}[1]{{\rm tr} \left[ #1\right]} 
\newcommand{\frecc}{\longrightarrow}
\newcommand{\scal}[2]{\left\langle #1 \, , \, #2 \right\rangle}
\newcommand{\R}{\mathbb{R}} 
\newcommand{\C}{\mathbb C} 
\newcommand{\N}{\mathbb N} 
\newcommand{\T}{\mathbb{T}} 
\newcommand{\PP}{\mathbb{P}} 
\newcommand{\lft}{\left(}
\newcommand{\rgt}{\right)}
\newcommand{\ldue}[1]{L^2 \left( #1 \right)}
\newcommand{\luno}[1]{L^1 \left( #1 \right)}
\newcommand{\bor}[1]{\bb (#1)}
\begin{document}


\title[Quantum homodyne tomography as an informationally complete POVM]{QUANTUM HOMODYNE TOMOGRAPHY AS AN INFORMATIONALLY COMPLETE 
POSITIVE OPERATOR VALUED MEASURE}

\author{Paolo Albini}

\address{Dipartimento di Fisica, Universit\`a di Genova, Via Dodecaneso 33,\\
Genova, 16146, Italy\\
and I.N.F.N., Sezione di Genova, Via Dodecaneso 33,\\
Genova, 16146, Italy\\
albini@fisica.unige.it}

\author{Ernesto De Vito}

\address{Dipartimento di Scienze per l'Architettura, Universit\`a di Genova, 
Stradone S. Agostino 37\\
Genova, 16123, Italy,\\
and I.N.F.N., Sezione di
Genova, Via Dodecaneso~33\\
Genova, 16146, Italy\\
devito@dima.unige.it}

\author{Alessandro Toigo}

\address{Dipartimento di Informatica, Universit\`a di Genova, Via Dodecaneso 35\\ Genova, 16146, Italy,\\
and I.N.F.N., Sezione di Genova, Via Dodecaneso 33\\
Genova, 16146, Italy\\
toigo@ge.infn.it}

\maketitle


\begin{abstract}
We define a positive operator valued measure $E$ on $ [0,2\pi]
  \times\mathbb R$ describing the
  measurement of randomly sampled quadratures in quantum homodyne
  tomography, and we study its probabilistic properties. Moreover, we give a
  mathematical analysis of the relation between  the description of a
  state in terms of $E$ and the description provided by its Wigner
  transform.
\end{abstract}

\keywords{Quantum homodyne tomography; positive operator valued measures; Wigner function.}



\section{Introduction}

Quantum homodyne tomography \cite{Art,DAr,Leo,VR} allows to determine the state of a
single mode radiation field by repeated measurements of the quadrature observables $X_\theta$,
the phases $\theta$ being chosen randomly in
$\T=[0,2\pi]$. 
This can be seen as a consequence of the fact \cite{Cas} that, 
for a large class of observables $O$, there exists an associated
function $f_O:\T\times\R\to\R$ such that 
\begin{equation}\label{*}
\tr{ O\rho } = \int_0^{2\pi} \left[\int_{\R} f_O (\theta,x) \de
  \nu_\theta^\rho (x) \right] \frac{\de \theta}{2\pi} , 
\end{equation}
where $\nu_\theta^\rho$ is the probability distribution of 
outcomes obtained for the quadrature $X_\theta$ measured on the state
$\rho$ and $\de \theta / 2\pi$ is the uniform probability
distribution on $\T$.
Actual reconstrunction schemes are strictly related to a statistical interpetation
of formulas of this kind.
Indeed, quantum tomography experiments output a $n$-uple 
$\{(\Theta_i, X_i)\}_{i = 1}^n$ of pairs in $\T \times \R$, each one of which represents
the outcome $X_i$ of a measurement of the quadrature observable corresponding to 
the randomly picked phase $\Theta_i$. If one assumes such pairs to be samples from a random 
variable on $\T \times \R$ distributed accordingly to a probability measure $\mu^\rho$ such that
\begin{equation}\label{spezz}
\de\mu^\rho(\theta, x)=\de\nu_\theta^\rho (x) \frac{\de \theta}{2\pi},
\end{equation}
one can use the experimental outcomes to estimate integrals such as 
\eqref{*} (see Ref.~\cite{DAr} and references therein), for example by replacing $\de \mu^\rho$ with its empirical estimate
$\frac{1}{n} \sum_{i=1}^n \delta_{(\Theta_i,X_i)}$.

The above reconstruction formula, 
although very popular, is not the only scheme used for tomographical state estimation: other ones are known which don't rely on it \cite{DAr}.
The hypotesis that experimental results are distributed accordingly to \eqref{spezz} lies however under both every proposed 
reconstruction algorithm and its statistical analysis \cite{Art,DAr,Leo,Butu,Guta}. Actually, although given for granted in the cited literature, well-definiteness of a joint probability 
distribution such as $\mu^\rho$ in \eqref{spezz} is {\em a priori} not trivial.
In the first part of our paper we prove well defineteness of $\mu^\rho$ by showing there exists a positive operator valued 
measure (POVM) $E$ on  $\T\times\R$ such that  
\[\mu^\rho(Z)=tr{[E(Z)\rho]}\qquad\text{for any Borel subset $Z$ of }\T\times\R.\] 
According to the physical meaning of $\mu^\rho$, the POVM $E$ is the
{\em generalized observable} associated with the quantum homodyne
  tomography experimental setup. In particular, we show that $\mu^\rho$ has density
  $p^\rho(\theta,x)$ with respect to the 
  Lebesgue measure on  $\T\times \R$,  its
  support is always an unbounded set, and the mapping
  $\rho\mapsto \mu^\rho$ is injective (i.~e.~$E$ is {\em
    informationally complete}). 
The intertwining property
$ X_\theta= e^{i\theta N}X e^{-i\theta N}$, where $N$ is the number
operator and $X$ is the position operator, turns out to be crucial for the definition of $E$ (or, equivalently, for the definition of $\mu^\rho$).
We remark that the introduction of a POVM for the homodyne tomography measurement process is already 
present in physical literature (see section 2.3.2 in Ref.~ \cite{DAr}), but it is grounded on a formal construction.
We provide here an alternative, rigorous formulation.

In their seminal paper on quantum homodyne tomography \cite{VR}, Vogel and Risken argued that the the Radon transform of
$W(\rho)$, where $W(\rho)$ is the Wigner function associated to $\rho$, is precisely the probability density function $p^\rho$ generated by
the homodyne tomography measurement,  so that the following commutative diagram holds:
\[
\xymatrix{ & \text{States on }\mathcal H \ar_W[dl]\ar^{p^{\cdot}}[dr] & \\
  \text{Wigner functions on }\R^2
  \ar@<0.5ex>[rr]^{\text{Radon transform}} & & \text{probability densities on }\T\times\R } 
\] 
The suggested estimation procedure, applied also in the first homodyne tomography
experiments, is then based on the inversion of the Radon transform by means of classical
techniques in medical tomography. However, the derivation of this
fact is once again rather formal, and never given a rigorous basis in the literature on the subject:
in the second part of the paper, we will thus address the problems that arise in looking at such formulation 
of quantum tomography from a rigorous point of view. First of all, we will recall that in order for the Radon
transform to be well-defined, we need the Wigner function $W(\rho)$ to be integrable on $\R^2$. 
Then we will show that the support of $W(\rho)$ 
can {\em never} be bounded. This is a potential problem, since the estimation techniques used in classical tomography are 
explicitly devised for compactly supported objects. One can however by-pass the problem and still
give an inverse for the Radon transform if he assumes that the Wigner function under observation
is a Schwartz function on $\R^2$. This is precisely what happens in most homodyne tomography experiments,
where the states under observation are linear combinations of coherent or number states.
In Section \ref{secRadon} we show that this assumption on the Wigner
function is equivalent to suppose that $\rho$ has a kernel which is a
Schwartz function on $\R^2$ (since $\rho$ is an Hilbert-Schmidt
operator, $\rho$ is an integral operator whose kernel is a function on
$\R^2$). Under this assumption on $\rho$ we prove that the Radon
transform of $W(\rho)$ is $p^\rho$ and the inversion formula holds true. 

\section{Preliminaries and notations}\label{preliminari}
In this section, we will introduce the notations and give
a very brief description of the mathematical structure of quantum
homodyne tomography.

\subsection{Notations}
Let $\hh$ be a complex, separable Hilbert space with norm $\no{\cdot}$
and scalar product $\scal{\cdot}{\cdot}$ linear in the second
entry. Denote by $\lh$ the Banach space of the bounded operators on
$\hh$ with uniform norm $\no{\cdot}_{\cl}$. Let $\trh$ be the Banach
space of the trace class operators on $\hh$ with trace class norm
$\no{\cdot}_1$, and let $\sh$ be the convex subset of positive trace
one elements in $\trh$. Finally, let $\HS$ be the
Hilbert-Schmidt operators on $\hh$, with norm $\no{A}_2 =
\left[\tr{A^\ast A}\right]^{1/2}$. We recall that the elements of
$\sh$ are the {\em states} of the quantum system whose associated
Hilbert space is $\hh$.

Suppose $\Omega$ is a Hausdorff locally compact second countable
topological space. Let $\bor{\Omega}$ be the Borel $\sigma$-algebra of
$\Omega$.
We recall the following definition of positive operator valued measure.
\begin{definition}\label{defPOVM1}
A {\em positive operator valued measure} (POVM) on $\Omega$ with
values in $\hh$ is a map $E: \bor{\Omega} \frecc \lh$ such that 
\begin{itemize}
\item[({\rm i})] $E(A) \geq 0$ for all $A \in\bor{\Omega}$;
\item[({\rm ii})] $E(\Omega) = I$;
\item[({\rm iii})] if $\{ A_i \}_{i\in I}$ is a denumerable sequence of pairwise
  disjoint sets in $\bor{\Omega}$, then 
$$
E (\cup_i A_i) = \sum\nolimits_i E(A_i) ,
$$
where the sum converges in the weak (or, equivalently, ultraweak or
strong) topology of $\lh$. 
\end{itemize}
$E$ is a {\em projection valued measure} (PVM) if $E(A)^2 = E(A)$ for
all $A\in\bor{\Omega}$. 
\end{definition}
If $E$ is a POVM and $T\in\trh$, we define
$$
\mu^T_E (A) = \tr{E(A) T} \quad \forall A\in\bor{\Omega} .
$$
Then, $\mu^T_E$ is a bounded complex measure on $\Omega$. If $\rho\in\sh$, $\mu^\rho_E$ is actually a
probability measure on $\Omega$, and $\mu^\rho_E (A)$ is the
probability of obtaining a 
result in $A$ when performing a measurement of $E$ on the state
$\rho$. 
 
\subsection{The mathematics of quantum homodyne tomography}

The physical system of quantum homodyne tomography is a single radiation mode of the electromagnetic field.
The associated Hilbert space is $\hh =
\ldue{\R}$. Let 
$$
\aa = \left\{ p(x) \, e^{-\frac{x^2}{2}} \mid p \textrm{ is a polinomial}, \right\}
$$
which is a dense subspace of $\hh$. As usual, we denote by $X$ and $P$ the position and momentum
operators, respectively.  Their action on $\aa$ is explicitly given by
\[ (Xf)(x)=x f(x) \qquad\text{and}\qquad (Pf)(x)=-i\frac{\de f}{\de
  x}(x).\]
Letting $\T=[0,2\pi]$, for
any $\theta\in\T$ the corresponding quadrature is the
self-adjoint operator $X_\theta$ on $\ldue{\R}$,  whose action on $\aa$ is
\[ X_\theta  =\cos{\theta} X + \sin{\theta} P.\]
If $x,y\in\R$, and $x = r\cos \theta$, $y = r\sin\theta$, we have
$$
\left[ e^{ir X_\theta} f \right] (z) = \left[ e^{i (xX + yP)} f \right] (z) = e^{i\left( \frac{xy}{2} + xz \right)} f(z+y)
$$
for all $f\in \ldue{\R}$.

We denote by $\Pi_\theta$ the PVM on $\R$ associated to $X_\theta$ by spectral
theorem. In particular, $\Pi (A) := \Pi_0 (A)$ is just
multiplication in $\ldue{\R}$ 
by the characteristic function $1_A$ of $A$, while $\Pi_{\frac{\pi}{2}} (A) =
\ff^\ast \Pi (A) \ff$, where $\ff$ is the Fourier transform 
\begin{equation}\label{FT}
\ff f = \frac{1}{\sqrt{2\pi}} \int_{\R} e^{-ixy} f(y) \de y \quad f\in L^1\cap L^2 (\R) .
\end{equation}

The number operator is the essentially self-adjoint operator $N$
whose action on $\aa$ is 
\[N = \frac{1}{2} \lft X^2 + P^2 - 1 \rgt .\]
For all $\theta\in\T$, we let $V(\theta)=e^{i\theta N}$. Since the spectrum of $N$ is $\N$, the map $\theta\to V(\theta)$ is a unitary continuous
representation of $\T$ acting on $\ldue{\R}$, where we regard $\T$ as
a topological abelian group with addition modulo $2\pi$. The number representation $V$ intertwines the quadratures $X_\theta$, in the sense that
\begin{equation*}
X_\theta = V(\theta) X V(\theta)^\ast
\end{equation*}
for all $\theta\in\T$, and
\begin{equation*}
\Pi_\theta (A) = V(\theta) \Pi (A) V(\theta)^\ast
\end{equation*}
for all $\theta\in\T$ and $A\in\bor{\R}$.

Finally, given $\rho\in\sh$ and $\theta\in\T$, we denote by
$\nu^\rho_\theta$ the probability distribution on $\R$ of the outcomes
of the quadrature $X_\theta$ measured on the state $\rho$, namely
\begin{equation}
  \label{nurt}
   \nu^\rho_\theta (A) = \tr{\rho \Pi_\theta (A)} = \tr{\rho V(\theta)
  \Pi (A) V(\theta)^\ast} \quad \forall A \in \bor{\R} .
\end{equation}

\section{Main results}\label{mainresults}

In this section, we will describe explicitly the POVM which
intervenes in homodyne tomography and the associated probability
distributions on states. 

The first result studies some properties of the family of probability
measures $\nu^\rho_\theta$ defined by \eqref{nurt}. 
In its proof and in the statement of some of the following results, we will make use of the concept of
section through some $\theta \in \T$ of a Borel set $B \in \bor{\T \times \R}$, defined as
follows:
$$
B^\theta = \{x \in \R \vert (\theta, x) \in B\}.
$$
\begin{proposition}
Given $\rho\in\sh$ and $\theta\in\T$
\begin{itemize}
\item[({\rm i})] the probability measure
$\nu^\rho_\theta$ has density $p_\theta^\rho \in \luno{\R}$ with
respect to the Lebesgue measure on  $\R$;
\item[({\rm ii})] the map $\theta \mapsto \nu^\rho_\theta (B^\theta)$ is
  measurable for any $B\in\bor{\T\times \R}$.
\end{itemize}
\end{proposition}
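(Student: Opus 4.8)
For part (i), the plan is to reduce everything to the known case $\theta = 0$ by exploiting the intertwining relation $\Pi_\theta(A) = V(\theta)\Pi(A)V(\theta)^\ast$. Indeed, from \eqref{nurt} we have $\nu^\rho_\theta(A) = \tr{\rho V(\theta)\Pi(A)V(\theta)^\ast} = \tr{V(\theta)^\ast\rho V(\theta)\,\Pi(A)}$. Setting $\rho_\theta := V(\theta)^\ast\rho V(\theta)$, which is again an element of $\sh$, we get $\nu^\rho_\theta(A) = \tr{\rho_\theta\,\Pi(A)} = \nu^{\rho_\theta}_0(A)$. Since $\Pi(A)$ is multiplication by $1_A$ in $\ldue{\R}$, a standard computation using the spectral decomposition $\rho_\theta = \sum_n \lambda_n \kb{\psi_n}{\psi_n}$ (with $\lambda_n \geq 0$, $\sum_n\lambda_n = 1$, $\{\psi_n\}$ orthonormal) gives $\nu^{\rho_\theta}_0(A) = \sum_n \lambda_n \int_A |\psi_n(x)|^2\,\de x$; monotone convergence shows that $p^\rho_\theta(x) := \sum_n \lambda_n |\psi_n(x)|^2$ is a well-defined nonnegative $\luno{\R}$ function (of total integral $1$) which is the density of $\nu^\rho_\theta$. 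I expect no real obstacle here beyond checking the interchange of sum and integral.

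For part (ii), the plan is to show measurability of $\theta \mapsto \nu^\rho_\theta(B^\theta)$ by a monotone class / Dynkin system argument on $B \in \bor{\T\times\R}$. The key observations are: (a) for fixed $A \in \bor{\R}$, the map $\theta \mapsto \nu^\rho_\theta(A) = \tr{V(\theta)^\ast\rho V(\theta)\,\Pi(A)}$ is continuous, since $\theta \mapsto V(\theta)$ is strongly continuous and $\rho$ is trace class (so $\theta \mapsto V(\theta)^\ast\rho V(\theta)$ is continuous in $\no{\cdot}_1$, and $\Pi(A)$ is bounded); in particular it is Borel measurable. (b) For a ``rectangle'' $B = C \times A$ with $C \in \bor{\T}$, $A \in \bor{\R}$, one has $B^\theta = A$ if $\theta \in C$ and $B^\theta = \emptyset$ otherwise, so $\theta \mapsto \nu^\rho_\theta(B^\theta) = 1_C(\theta)\,\nu^\rho_\theta(A)$ is measurable by (a). Let $\mathcal{D}$ be the collection of $B \in \bor{\T\times\R}$ for which $\theta\mapsto\nu^\rho_\theta(B^\theta)$ is measurable. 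Since $(B^c)^\theta = \R\setminus B^\theta$ and $\nu^\rho_\theta$ is a probability measure, $\nu^\rho_\theta((B^c)^\theta) = 1 - \nu^\rho_\theta(B^\theta)$, so $\mathcal{D}$ is closed under complements; and for an increasing sequence $B_k \uparrow B$ one has $B_k^\theta \uparrow B^\theta$, hence $\nu^\rho_\theta(B_k^\theta) \to \nu^\rho_\theta(B^\theta)$ pointwise, so $\mathcal{D}$ is closed under increasing countable unions. Thus $\mathcal{D}$ is a Dynkin system containing the $\pi$-system of measurable rectangles, which generates $\bor{\T\times\R}$ (as $\T$ and $\R$ are second countable), and Dynkin's lemma gives $\mathcal{D} = \bor{\T\times\R}$.

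The main point requiring care is the continuity claim in (a): one must justify that $\theta \mapsto V(\theta)^\ast \rho V(\theta)$ is continuous in trace norm. This follows because strong continuity of $V$ together with $\rho \in \trh$ yields, for $\rho = \sum_n\lambda_n\kb{\psi_n}{\psi_n}$, the estimate $\no{V(\theta)^\ast\rho V(\theta) - V(\theta_0)^\ast\rho V(\theta_0)}_1 \leq \sum_n \lambda_n \bigl(\no{V(\theta)^\ast\psi_n - V(\theta_0)^\ast\psi_n} + \no{V(\theta)\psi_n - V(\theta_0)\psi_n}\bigr) \cdot (\text{bounded factor})$, and dominated convergence (the summand is bounded by $4\lambda_n$ and tends to $0$ for each $n$) gives the result; alternatively one may note that $\T \ni \theta \mapsto \tr{V(\theta)^\ast \rho V(\theta) \Pi(A)}$ is a bounded pointwise limit of the continuous partial sums $\sum_{n\le M}\lambda_n\scal{\psi_n}{V(\theta)\Pi(A)V(\theta)^\ast\psi_n}$, which already gives measurability directly without needing trace-norm continuity. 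Either route closes part (ii).
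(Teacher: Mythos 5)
Your proof is correct, but it follows a different route from the paper's on both counts. For (i), the paper does not construct the density at all: it simply observes that if $A$ is Lebesgue-null then $\Pi(A)=1_A\cdot=0$, hence $\nu^\rho_\theta(A)=\tr{\rho V(\theta)\Pi(A)V(\theta)^\ast}=0$, and absolute continuity plus Radon--Nikodym gives $p^\rho_\theta$; your spectral-decomposition argument proves the same thing but buys more, namely the explicit formula $p^\rho_\theta(x)=\sum_n\lambda_n|\psi_n(x)|^2$ with $\psi_n$ the eigenvectors of $V(\theta)^\ast\rho V(\theta)$. For (ii), the paper works with an arbitrary Borel set $B$ directly: it expands $\tr{\rho V(\theta)\Pi(B^\theta)V(\theta)^\ast}=\sum_{n,m}\scal{e_n}{V(\theta)^\ast\rho V(\theta)e_m}\scal{e_m}{\Pi(B^\theta)e_n}$ in an orthonormal basis, notes that $\theta\mapsto\scal{e_m}{\Pi(B^\theta)e_n}=\int 1_B(\theta,x)e_n(x)\overline{e_m(x)}\,\de x$ is measurable by Fubini and the other factor is continuous, and concludes measurability of the double series as a pointwise limit; your good-sets argument instead establishes the result on rectangles and propagates it by a generating-class theorem, which avoids the double series and the Fubini step (and only needs measurability, not the trace-norm continuity you discuss, as you yourself note). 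One bookkeeping point: closure under complements and increasing unions is not literally the definition of a Dynkin ($\lambda$-)system needed for the $\pi$--$\lambda$ theorem; you should either also check closure under proper differences, which is the same one-line computation $\nu^\rho_\theta((B\setminus A)^\theta)=\nu^\rho_\theta(B^\theta)-\nu^\rho_\theta(A^\theta)$ valid because $\nu^\rho_\theta$ is finite, or observe that your class contains the algebra of finite disjoint unions of rectangles and apply the monotone class theorem. With that trivial repair your argument is complete.
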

\begin{proof}
If $A\in \bor{\R}$ has zero Lebesgue measure, then $\Pi (A) f = 1_A
f = 0$ for all $f\in \ldue{\R}$. Therefore, $\nu^\rho_\theta (A) =
\tr{\rho V(\theta) \Pi (A) V(\theta)^\ast} = 0$. Thus, the first claim
follows.

If $\{ e_n \}_{n\in\N}$ is a Hilbert basis of $\hh$, then
\begin{eqnarray*}
\tr{\rho V(\theta) \Pi (B^\theta) V(\theta)^\ast} & = & \sum_n
\scal{e_n}{V(\theta)^\ast \rho V(\theta) \Pi (B^\theta) e_n} \\ 
& = & \sum_n \sum_m \scal{e_n}{V(\theta)^\ast \rho V(\theta) e_m}
\scal{e_m}{\Pi (B^\theta) e_n} . 
\end{eqnarray*}
Since the map $\theta \mapsto \scal{e_n}{V(\theta)^\ast \rho V(\theta)
  e_m}$ is continuous and the map $\theta \mapsto \scal{e_m}{\Pi
  (B^\theta) e_n} = \int 1_B (\theta,x) e_n (x) \overline{e_m (x)}
\de x$ is measurable by Fubini theorem, measurability of $\theta
\mapsto \tr{\rho V(\theta) \Pi  (B^\theta) V(\theta)^\ast}$ follows.
\end{proof}

Next theorem shows the existence of a POVM associated to
quantum homodyne tomography. This theorem should be compared with the
formal derivation of $E$ given in Ref.~ \cite{DAr} (see eq.~(2.34) therein). 
\begin{theorem}\label{PropPOVM}
There exists a unique positive operator valued measure $E$ on $\T
\times \R$ acting in $\ldue{\R}$ such that
\begin{equation}
  \label{rabbia}
  \tr{\rho E(B)}= \int_{\T} \nu_\theta^\rho (B^\theta) \frac{\de\theta}{2\pi}
\end{equation}
for all $\rho\in\sh$ and $B\in\bor{\T \times \R}$.
\end{theorem}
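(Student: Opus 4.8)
The plan is to construct $E$ directly from the right-hand side of \eqref{rabbia} and then check it is a POVM. First I would fix a Borel set $B \in \bor{\T \times \R}$ and define a sesquilinear form on a dense domain by
\[
q_B(f,g) = \int_{\T} \scal{f}{V(\theta)\Pi(B^\theta)V(\theta)^\ast g}\,\frac{\de\theta}{2\pi},
\]
for $f,g$ in (say) $\aa$, or more simply for $f,g$ in all of $\ldue{\R}$. The integrand is measurable in $\theta$ by the same Fubini-plus-continuity argument as in the preceding Proposition (applied to the rank-one operator $T = \kb{g}{f}$ in place of $\rho$), and it is bounded in modulus by $\no{f}\,\no{g}$ since $V(\theta)$ is unitary and $0 \le \Pi(B^\theta) \le I$. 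Hence $q_B$ is a bounded sesquilinear form with $\lvert q_B(f,g)\rvert \le \no{f}\,\no{g}$, so by Riesz representation there is a unique $E(B) \in \lh$ with $\scal{f}{E(B)g} = q_B(f,g)$ and $\no{E(B)}_{\cl} \le 1$.

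Next I would verify the POVM axioms of Definition \ref{defPOVM1}. Positivity: $\scal{f}{E(B)f} = \int_\T \no{\Pi(B^\theta)V(\theta)^\ast f}^2\,\frac{\de\theta}{2\pi} \ge 0$. Normalization: for $B = \T \times \R$ one has $B^\theta = \R$, so $\Pi(B^\theta) = I$ and $E(\T\times\R) = \frac{1}{2\pi}\int_\T I\,\de\theta = I$. Countable additivity in the weak topology: if $\{B_i\}$ are pairwise disjoint with union $B$, then for each fixed $\theta$ the sections $\{B_i^\theta\}$ are pairwise disjoint with union $B^\theta$, so $\scal{f}{\Pi(B_i^\theta)V(\theta)^\ast f}$ sums to $\scal{f}{\Pi(B^\theta)V(\theta)^\ast f}$ by countable additivity of $\Pi$; then the monotone (or dominated) convergence theorem moves the sum through the $\theta$-integral, giving $\scal{f}{E(B)f} = \sum_i \scal{f}{E(B_i)f}$, and polarization upgrades this to $\scal{f}{E(B)g}=\sum_i\scal{f}{E(B_i)g}$, i.e. weak convergence of the operator series. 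Thus $E$ is a POVM.

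Finally, \eqref{rabbia} itself: for $\rho \in \sh$, write $\rho = \sum_n \lambda_n \kb{e_n}{e_n}$ in an eigenbasis and compute $\tr{\rho E(B)} = \sum_n \lambda_n \scal{e_n}{E(B)e_n} = \sum_n \lambda_n \int_\T \scal{e_n}{V(\theta)\Pi(B^\theta)V(\theta)^\ast e_n}\,\frac{\de\theta}{2\pi}$; interchanging the (absolutely convergent, since $\sum_n\lambda_n = 1$) sum with the integral via dominated convergence and recognizing $\sum_n \lambda_n \scal{e_n}{V(\theta)\Pi(B^\theta)V(\theta)^\ast e_n} = \tr{\rho V(\theta)\Pi(B^\theta)V(\theta)^\ast} = \nu^\rho_\theta(B^\theta)$ gives the claimed identity. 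Uniqueness is immediate: the states span a dense subspace of $\trh$ and $\trh$ separates points of $\lh$, so \eqref{rabbia} determines each $E(B)$ uniquely. The only genuinely delicate point is the measurability of $\theta \mapsto \scal{f}{V(\theta)\Pi(B^\theta)V(\theta)^\ast g}$ for general Borel $B$ — not just product sets — but this is exactly what part (ii) of the preceding Proposition supplies (for $B$ a product it is elementary, and the general case follows since such $B$ generate $\bor{\T\times\R}$ and the class of $B$ for which measurability holds is closed under the relevant limits). So I would simply invoke that Proposition.
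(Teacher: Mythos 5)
Your construction is correct and follows essentially the same route as the paper: both realize $E(B)$ as the average $\int_\T V(\theta)\Pi(B^\theta)V(\theta)^\ast\,\frac{\de\theta}{2\pi}$, with measurability in $\theta$ for an arbitrary Borel $B$ supplied by the preceding Proposition, and then verify the POVM axioms via monotone/dominated convergence. The only (inessential) difference is the device used to make sense of the integral: you define $E(B)$ through a bounded sesquilinear form and the Riesz lemma, i.e.\ a weak integral tested on vectors (equivalently on rank-one operators $\kb{g}{f}$, to which the Proposition, stated for states, applies after polarization), whereas the paper defines it as an ultraweak integral, decomposing a general $T\in\trh$ as $\sum_{k=0}^{3} i^k T_k$ with $T_k\geq 0$ to reduce to states; the two definitions give the same operator, and your explicit verification of \eqref{rabbia} and of uniqueness matches what the paper leaves implicit.
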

\begin{proof}
Eq.~\eqref{nurt} suggests to define the POVM as
$$
E(B) = \int_{\T} V(\theta) \Pi (B^\theta) V(\theta)^\ast \frac{\de\theta}{2\pi} .
$$
To prove that the above definition is correct, we first show that 
the map $\theta \mapsto V(\theta) \Pi (B^\theta) V(\theta)^\ast$ is
$\frac{\de\theta}{2\pi}$-ultraweakly integrable for all Borel subsets $B$ of
$\T\times\R$, and then we prove that $B\mapsto E(B)$ is a POVM.

Now, given $B\in\bor{\T \times \R}$ and $\rho\in\sh$,
the map $\theta \mapsto \tr{\rho V(\theta) \Pi (B^\theta)
  V(\theta)^\ast}$ is measurable by the previous proposition, and 
$$
\left| \tr{\rho V(\theta) \Pi (B^\theta) V(\theta)^\ast} \right| \leq
\no{\rho}_1 \no{\Pi (B^\theta)}_{\cl} \leq 1 \quad \forall \theta \in
\T . 
$$
Therefore, it is $\frac{\de\theta}{2\pi}$-integrable. This shows that
$\theta \mapsto V(\theta) \Pi (B^\theta) V(\theta)^\ast$ is
$\frac{\de\theta}{2\pi}$-ultraweakly integrable. 
 
Suppose $T\in\trh$. Then $T = \sum_{k=0}^3 i^k T_k$, with $T_k \geq 0$
and $\no{T_0}_1 + \no{T_2}_1 = \no{T_1}_1 + \no{T_3}_1 \leq
\no{T}_1$. Setting $\rho_k = T_k / \no{T_k}_1$ (with $0/0 = 0$), we
see that 
\begin{eqnarray*}
\left| \int_{\T} \tr{T V(\theta) \Pi (B^\theta) V(\theta)^\ast}
  \frac{\de\theta}{2\pi} \right| & \leq & \sum_{k=0}^3 \no{T_k}_1
\int_{\T} \left| \tr{\rho_k V(\theta) \Pi (B^\theta) V(\theta)^\ast}
\right| \frac{\de\theta}{2\pi} \\ 
& \leq & \sum_{k=0}^3 \no{T_k}_1 \leq 2 \no{T}_1 . 
\end{eqnarray*}
This shows the existence of $E(B)\in\lh$. Clearly, $E(B) \geq 0$, and
$E(\T \times \R) = I$. 

If $\{ B_n \}_{n\in\N}$ is a monotone increasing family of elements in
$\bor{\T \times \R}$, with $B_n \uparrow B$, then, for all $\theta$, 
$$
\tr{\rho V(\theta) \Pi (B_n^\theta) V(\theta)^\ast} = \nu^\rho_\theta
(B_n^\theta) \uparrow \nu^\rho_\theta (B^\theta) = \tr{\rho V(\theta)
  \Pi (B^\theta) V(\theta)^\ast} . 
$$
By dominated convergence theorem
$$
\int_{\T} \tr{\rho V(\theta) \Pi (B_n^\theta) V(\theta)^\ast}
\frac{\de\theta}{2\pi} \uparrow \int_{\T} \tr{\rho V(\theta) \Pi
  (B^\theta) V(\theta)^\ast} \frac{\de\theta}{2\pi} , 
$$
and ultraweak $\sigma$-additivity of $E$ follows.
\end{proof}

We let $\mu^\rho = \tr{E(\cdot) \rho}$ be the probability distribution
on $\T\times \R$ associated to a measurement of $E$ performed on the
state $\rho$. By definition \eqref{rabbia} it follows that
\begin{equation}\label{murho}
\mu^\rho (B) = \int_{\T} \nu_\theta^\rho (B^\theta) \frac{\de\theta}{2\pi} 
\end{equation}
as wanted.
The following theorem gives some properties of $\mu^\rho$.
\begin{theorem}\label{TeoMis}
Let $\rho\in\sh$.
\begin{itemize}
\item[({\rm i})] The measure $\mu^\rho$ has density with respect to
  $\frac{\de\theta}{2\pi} \, \de x$. We denote such density by
  $p^\rho$. 
\item[({\rm ii})] For $\frac{\de\theta}{2\pi}$-almost all $\theta$, $p^\rho
  (\theta,x) = p^\rho_\theta (x)$ for $\de x$-almost all $x$. 
\item[({\rm iii})] The marginal probability distribution induced by $\mu^\rho$ on
  $\T$ is the Haar measure $\frac{\de\theta}{2\pi}$, and the
  conditional probability distribution induced by $\mu^\rho$ on $\R$
  is $\nu^\rho_\theta$ for $\frac{\de\theta}{2\pi}$-almost all
  $\theta$. 
\end{itemize}
\end{theorem}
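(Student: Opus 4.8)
The plan is to derive all three statements from the explicit formula \eqref{murho}, $\mu^\rho(B)=\int_\T\nu^\rho_\theta(B^\theta)\,\frac{\de\theta}{2\pi}$, together with the two parts of the preceding Proposition.

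For part~(i) I would prove absolute continuity of $\mu^\rho$ with respect to $\frac{\de\theta}{2\pi}\,\de x$ and then invoke the Radon--Nikodym theorem, rather than trying to assemble the densities $p^\rho_\theta$ of the fibre measures into a jointly measurable function by hand. Concretely: if $B\in\bor{\T\times\R}$ is $\frac{\de\theta}{2\pi}\,\de x$-null, then by the Fubini--Tonelli theorem $B^\theta$ has Lebesgue measure zero for $\frac{\de\theta}{2\pi}$-almost every $\theta$; by part~(i) of the Proposition each such $\nu^\rho_\theta$ annihilates Lebesgue-null sets, so $\nu^\rho_\theta(B^\theta)=0$ for $\frac{\de\theta}{2\pi}$-a.e.\ $\theta$, and \eqref{murho} gives $\mu^\rho(B)=0$. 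Hence $\mu^\rho\ll\frac{\de\theta}{2\pi}\,\de x$ and a density $p^\rho\in\luno{\T\times\R,\frac{\de\theta}{2\pi}\,\de x}$ exists; since $\mu^\rho$ is positive, $p^\rho\ge 0$ almost everywhere.

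For parts~(ii) and~(iii) I would specialize \eqref{murho} to rectangles $B=A\times C$ with $A\in\bor{\T}$, $C\in\bor{\R}$: there $B^\theta=C$ for $\theta\in A$ and $B^\theta=\emptyset$ otherwise, so $\int_A\nu^\rho_\theta(C)\,\frac{\de\theta}{2\pi}=\int_A\bigl(\int_C p^\rho(\theta,x)\,\de x\bigr)\,\frac{\de\theta}{2\pi}$ for every $A$. As $A$ is arbitrary, $\nu^\rho_\theta(C)=\int_C p^\rho(\theta,x)\,\de x$ for $\frac{\de\theta}{2\pi}$-a.e.\ $\theta$, with an exceptional set a priori depending on $C$. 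To get a single exceptional set $N$ I would run this over a countable algebra generating $\bor{\R}$ (finite disjoint unions of intervals with rational endpoints), let $N$ be the union of the resulting null sets, and for $\theta\notin N$ extend the identity $\nu^\rho_\theta(\cdot)=\int_\cdot p^\rho(\theta,x)\,\de x$ from the algebra to all of $\bor{\R}$ by a monotone-class/uniqueness-of-measures argument. Then for $\theta\notin N$ the function $p^\rho(\theta,\cdot)$ is a Lebesgue density of $\nu^\rho_\theta$, hence coincides $\de x$-a.e.\ with $p^\rho_\theta$, which is~(ii). For~(iii), taking $C=\R$ in the rectangle identity gives $\mu^\rho(A\times\R)=\int_A\frac{\de\theta}{2\pi}$, so the first marginal is $\frac{\de\theta}{2\pi}$; and, since part~(ii) of the Proposition shows $(\theta,C)\mapsto\nu^\rho_\theta(C)$ to be a measurable Markov kernel, the rectangle identity $\mu^\rho(A\times C)=\int_A\nu^\rho_\theta(C)\,\frac{\de\theta}{2\pi}$ (extended from rectangles to all of $\bor{\T\times\R}$ by a monotone-class argument) exhibits $\theta\mapsto\nu^\rho_\theta$ as a disintegration of $\mu^\rho$ over the first coordinate; because $\T\times\R$ is standard Borel, such a disintegration is essentially unique, so the conditional distribution equals $\nu^\rho_\theta$ for $\frac{\de\theta}{2\pi}$-a.e.\ $\theta$.

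The only genuinely delicate step is the one in part~(ii) that upgrades ``for each fixed $C$, for $\frac{\de\theta}{2\pi}$-a.e.\ $\theta$'' to ``for $\frac{\de\theta}{2\pi}$-a.e.\ $\theta$, for every $C$''; choosing to produce $p^\rho$ via Radon--Nikodym rather than directly wrestling with the joint measurability of $(\theta,x)\mapsto p^\rho_\theta(x)$ is what keeps everything else routine.
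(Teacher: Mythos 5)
Your proposal is correct and follows essentially the same route as the paper: absolute continuity via Fubini plus Radon--Nikodym for (i), and for (ii) the rectangle identity with a countable determining family of Borel sets (the paper cites Halmos for such a sequence, you use finite unions of rational intervals plus a uniqueness-of-measures argument) to upgrade the per-set null sets to a single one. Your treatment of (iii) via the marginal computation and essential uniqueness of disintegrations merely spells out what the paper dismisses as ``this is just \eqref{murho}.''
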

\begin{proof}
\begin{itemize}
\item[({\rm i})] If $B\in\bor{\T \times \R}$ is a $\frac{\de\theta}{2\pi} \, \de
  x$-null set, then $B^\theta$ is $\de x$-null for
  $\frac{\de\theta}{2\pi}$-almost all $\theta$ by Fubini theorem, so,
  for such $\theta$'s, $\nu^\rho_\theta (B^\theta) = 0$. Therefore,
  $\mu^\rho (B) = 0$ by \eqref{murho}, thus showing that $\mu^\rho$
  has density with respect to $\frac{\de\theta}{2\pi} \, \de x$. 
\item[({\rm ii})] If $Z\in\bor{\T}$, $A\in\bor{\R}$, we have
$$
\int_Z \frac{\de\theta}{2\pi} \int_A p^\rho (\theta,x) \de x =
\mu^\rho (Z\times A) = \int_Z \nu^\rho_\theta (A)
\frac{\de\theta}{2\pi} . 
$$
This holds for all $Z$, implying that there exists a
$\frac{\de\theta}{2\pi}$-null set $N_A \in \bor{\T}$ such that $p^\rho
(\theta,\cdot)$ is $\de x$-integrable with 
$$
\int_A p^\rho (\theta,x) \de x = \nu^\rho_\theta (A)
$$
for all $\theta\notin N_A$.

Let $\{ A_n \}_{n\in\N}$ be a sequence in $\bor{\R}$ with the
following property: if $\mu_1 , \, \mu_2$ are positive measures on $\R$ such that
$\mu_1 (A_n) = \mu_2 (A_n)$ for all $n$, then $\mu_1 = \mu_2$ (such
sequence exists since $\R$ is second countable by Theorems C \S 5 and
A \S 13 in Ref.~ \cite{Hal}). Let $N = \cup_n N_{A_n}$. Then $N$ is
$\frac{\de\theta}{2\pi}$-null, and, if $\theta\notin N$, $p^\rho
(\theta,\cdot)$ is integrable with  
$$
\int_{A_n} p^\rho (\theta,x) \de x = \int_{A_n} p^\rho_\theta (x) \de x \quad \forall n .
$$
This implies that, if $\theta\notin N$, $p^\rho (\theta,x) = p^\rho_\theta
(x)$ for $\de x$-almost all $x$. 
\item[({\rm iii})] This is just \eqref{murho}.
\end{itemize}
\end{proof}
\begin{remark}
As a consequence of item~({\rm iii}) in the above proposition, a well known result on conditional
probability distribution ensures that, if $\phi$ is a $\mu^\rho$-integrable function, then
 $\phi(\theta,\cdot)$ is $\nu^\rho_\theta$-integrable for
 $\frac{\de\theta}{2\pi}$-almost all $\theta$, the map $\theta\mapsto
 \int_\R \phi(\theta,x) \de \nu^\rho_\theta (x)$ is
 $\frac{\de\theta}{2\pi}$-integrable, and 
\[
\int_{\T\times \R} \phi(\theta,x) \de \mu^\rho (\theta,x) = \int_\T \left[ \int_\R
  \phi(\theta,x) \de\nu^\rho_\theta (x) \right] \frac{\de\theta}{2\pi}. 
\]
\end{remark}

By Theorem \ref{TeoMis}, $E$ is the POVM associated to the
measurement of a quadrature $X_\theta$ chosen randomly from $\T$ with
uniform probability $\frac{\de\theta}{2\pi}$. 

The next corollary shows that the probability distribution $\mu^\rho$
can not have compact support for any $\rho\in\sh$. 
\begin{corollary}\label{suppprho}
For all $R>0$, we have
$$
\int_{\T} \int_{|x| > R} p^\rho (\theta,x) \de x \frac{\de\theta}{2\pi} > 0 .
$$
\end{corollary}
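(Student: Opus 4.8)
The plan is to argue by contradiction: suppose there exists $R>0$ with $\int_\T \int_{|x|>R} p^\rho(\theta,x)\,\de x\,\frac{\de\theta}{2\pi}=0$. By item~(iii) of Theorem~\ref{TeoMis} (equivalently by \eqref{murho}), this says $\int_\T \nu^\rho_\theta(\{|x|>R\})\,\frac{\de\theta}{2\pi}=0$, and since the integrand is nonnegative, $\nu^\rho_\theta(\{|x|>R\})=0$ for $\frac{\de\theta}{2\pi}$-almost all $\theta$. In particular there is a dense set of angles $\theta$ (indeed a co-null set) for which the quadrature $X_\theta$, measured on $\rho$, is supported in $[-R,R]$; equivalently $\Pi_\theta([-R,R])\rho\,\Pi_\theta([-R,R]) = \rho$, i.e. $\ran\rho$ lies in the spectral subspace of $X_\theta$ corresponding to $[-R,R]$, so $X_\theta$ restricted to $\ran\rho$ is a bounded self-adjoint operator with norm $\le R$.

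Next I would pick a fixed unit vector $\psi\in\ran\rho\cap\aa$ — more carefully, since $\rho$ is a positive trace-class operator, fix an eigenvector $\psi$ with $\rho\psi=\lambda\psi$, $\lambda>0$. For the (at least two distinct, in fact infinitely many) angles $\theta_1,\theta_2$ in the co-null set, we get $\no{X_{\theta_1}\psi}\le R/\sqrt\lambda\cdot\sqrt\lambda$... let me instead phrase it as: $\scal{\psi}{X_{\theta}^2\psi}\le R^2$ for a.e. $\theta$. Writing $X_\theta=\cos\theta\,X+\sin\theta\,P$, the function $\theta\mapsto \scal{\psi}{X_\theta^2\psi}$ extends to a trigonometric polynomial of degree $2$ in $\theta$ (it is $\cos^2\theta\,\scal{\psi}{X^2\psi}+\sin^2\theta\,\scal{\psi}{P^2\psi}+\cos\theta\sin\theta\,\scal{\psi}{(XP+PX)\psi}$, all terms finite if $\psi\in\aa$), hence continuous; being $\le R^2$ a.e. it is $\le R^2$ everywhere. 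Taking $\theta=0$ and $\theta=\pi/2$ gives $\scal{\psi}{X^2\psi}\le R^2$ and $\scal{\psi}{P^2\psi}\le R^2$, so $\scal{\psi}{(X^2+P^2)\psi}\le 2R^2$, i.e. $\scal{\psi}{(2N+1)\psi}\le 2R^2$.

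The contradiction then comes from an uncertainty-type lower bound: for any unit vector, $\scal{\psi}{X^2\psi}+\scal{\psi}{P^2\psi}=\scal{\psi}{(2N+1)\psi}\ge 1$, which alone is not enough, so the real point is that no single vector $\psi$ can have $\nu^\psi_\theta$ compactly supported for even two non-parallel $\theta$: if $X_{\theta_1}\psi$ and $X_{\theta_2}\psi$ are both in the domain and bounded, then since $\{X_{\theta_1},X_{\theta_2}\}$ generate (by linear combination) both $X$ and $P$, both $X\psi$ and $P\psi$ are $L^2$, but more is needed — one should iterate or invoke that a state with compact support for two complementary quadratures cannot exist because, e.g., its Wigner/characteristic function would be entire of exponential type and compactly supported in conjugate variables, forcing $\psi=0$. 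The cleanest route: reduce to a pure state $\psi$ and observe $\nu^\psi_0$ compactly supported means $\psi$ has compact support in position, while $\nu^\psi_{\pi/2}$ compactly supported means $\hat\psi=\ff\psi$ has compact support, and a nonzero function cannot have both itself and its Fourier transform compactly supported. Since the co-null set of good $\theta$ contains $0$ and $\pi/2$ up to a measure-zero adjustment — here one must be slightly careful and instead note the good set is dense, so choose two good angles and conjugate by the metaplectic (number-operator) representation to reduce to the pair $(0,\pi/2)$, using $\Pi_{\theta_2-\theta_1}=V(\theta_2-\theta_1)^\ast\Pi_{\theta_1}\cdots$.

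The main obstacle I anticipate is the measure-zero bookkeeping: the hypothesis only gives $\nu^\rho_\theta(\{|x|>R\})=0$ for \emph{almost every} $\theta$, not for the specific angles $0$ and $\pi/2$. This is resolved by the continuity argument above — once a good vector $\psi$ is fixed, $\theta\mapsto\scal{\psi}{X_\theta^2\psi}$ is continuous, so the a.e. bound upgrades to an everywhere bound and one may legitimately evaluate at $\theta=0,\pi/2$ — provided one first arranges $\psi\in\aa$ (or at least in the domain of $X^2$ and $P^2$); choosing $\psi$ among the eigenvectors of $\rho$ does not guarantee this, so a preliminary smoothing or a density argument is needed to produce a unit vector $\psi$ in $\ran\rho$ (or arbitrarily close to it) lying in $\aa$ with $\scal{\psi}{X_\theta^2\psi}\le R^2$ for a.e. $\theta$. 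Once that technical point is handled, the Fourier-support contradiction closes the proof.
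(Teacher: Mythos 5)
Your reduction steps are fine as far as they go: from $\int_\T\nu^\rho_\theta(A_R)\,\frac{\de\theta}{2\pi}=0$ (with $A_R=\{|x|>R\}$) you correctly get $\nu^\rho_\theta(A_R)=0$ for almost every $\theta$, and since $\rho\ge 0$ and $\Pi_\theta(A_R)$ is a projection this does force $\ran\rho\subseteq\ran\Pi_\theta([-R,R])$ for those $\theta$, so an eigenvector $\psi$ of $\rho$ with positive eigenvalue has $\nu^\psi_\theta$ supported in $[-R,R]$ for a.e.\ $\theta$. The genuine gap is that the step which must carry the whole proof --- that a nonzero vector cannot have compactly supported quadrature distributions for two non-parallel angles --- is never actually proved, and the one concrete mechanism you offer for it is wrong. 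Conjugation by $V(\alpha)$ sends $\Pi_\theta$ to $\Pi_{\theta-\alpha}$, i.e.\ it rotates \emph{both} good angles by the same amount and preserves their separation $\theta_2-\theta_1$; you can bring one angle to $0$, but you then sit at the pair $(0,\theta_2-\theta_1)$, not $(0,\pi/2)$, and the fact ``a nonzero function and its Fourier transform cannot both be compactly supported'' is not directly applicable (the set of good angles is only co-null, so you cannot assume $0$ and $\pi/2$ belong to it). To close this route you would need either the full metaplectic representation (squeezings, which the paper does not introduce) or a direct argument, e.g.: with $\phi=V(\theta_1)^\ast\psi$ compactly supported, $V(\theta_1-\theta_2)\phi$ is a fractional Fourier transform of $\phi$ with nondegenerate chirp kernel (since $\theta_2-\theta_1\neq 0,\pi$), hence by Paley--Wiener it is the restriction of an entire function and cannot be compactly supported unless $\phi=0$; or note that $r\mapsto\scal{\psi}{e^{irX_{\theta_2}}\psi}$ is entire because $\nu^\psi_{\theta_2}$ has compact support, yet vanishes for large real $r$ by the explicit translation action of $e^{irX_{\theta_2}}$ on the compactly supported $\phi$. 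Some such argument is the mathematical core of the corollary, and it is missing from your proposal.

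The alternative route you sketch (continuity of $\theta\mapsto\scal{\psi}{X_\theta^2\psi}$) cannot rescue the proof even if the domain problem were fixed ($\psi\in\aa$ is indeed not guaranteed for eigenvectors of $\rho$): as you concede, the resulting bounds $\scal{\psi}{X^2\psi}\le R^2$ and $\scal{\psi}{P^2\psi}\le R^2$ contradict nothing for large $R$ (the vacuum satisfies them uniformly in $\theta$ with $R^2=1/2$), so second moments are a dead end and only genuine compact support can produce the contradiction. For comparison, the paper's proof is direct and much shorter: it writes the quantity as $\tr{\rho^\prime\,\Pi(A_R)}$ with $\rho^\prime=\int_\T V(\theta)^\ast\rho V(\theta)\,\frac{\de\theta}{2\pi}$, observes that $\rho^\prime$ commutes with $V$ and is therefore diagonal in the number basis, and concludes because $\scal{e_n}{\Pi(A_R)e_n}>0$ for every Hermite function $e_n$ (a nonzero real-analytic function has positive mass outside any bounded interval). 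That argument needs no contradiction, no choice of eigenvector, and no Fourier-support input.
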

\begin{proof}
With $A_R = \{ x\in\R \mid |x|>R \}$, we have
\begin{eqnarray*}
\int_{\T} \int_{|x| > R} p^\rho (\theta,x) \de x
\frac{\de\theta}{2\pi} & = & \mu^\rho (\T \times A_R) = 
\int_{\T} \tr{\rho V(\theta) \Pi (A_R) V(\theta)^\ast} \frac{\de\theta}{2\pi} \\
& = & \tr{\rho^\prime \Pi (A_R)} ,
\end{eqnarray*}
with $\rho^\prime = \int_{\T} V(\theta)^\ast \rho V(\theta)
\frac{\de\theta}{2\pi}$. $\rho^\prime$ is a trace one positive
operator. Since it commutes with the representation $V$ of $\T$, it is
diagonal in the number basis $\{e_n\}_{n\in\N}$ of $\ldue{\R}$. Since
$\scal{e_n}{\Pi (A_R) e_n} > 0$ for all $n$, the claim follows. 
\end{proof}

As a consequence, the map $\rho\mapsto p^\rho$ from $\sh$ to the set
$P(\T\times \R)$ of probability densities in $\luno{\T\times \R}$ is
not surjective. The next corollary shows that it is actually injective, i.~e.~the POVM $E$ is {\em
  informationally complete} \cite{OQP}. 
\begin{corollary}
If $\rho, \sigma\in\sh$ and $\rho \neq \sigma$, then $\mu^\rho \neq \mu^\sigma$.
\end{corollary}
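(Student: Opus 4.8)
The plan is to prove the contrapositive: assuming $\mu^\rho=\mu^\sigma$, deduce $\rho=\sigma$. The argument has three stages — first descending from $\mu^\rho$ to the family $\{\nu^\rho_\theta\}$, then passing to characteristic functions while discarding the exceptional phases by a density argument, and finally invoking injectivity of the quantum characteristic transform. \emph{Stage 1.} By \eqref{murho}, $\mu^\rho=\mu^\sigma$ gives $\int_Z\nu^\rho_\theta(A)\,\frac{\de\theta}{2\pi}=\int_Z\nu^\sigma_\theta(A)\,\frac{\de\theta}{2\pi}$ for all $Z\in\bor{\T}$ and $A\in\bor{\R}$, so for each fixed $A$ there is a $\frac{\de\theta}{2\pi}$-null set $N_A$ with $\nu^\rho_\theta(A)=\nu^\sigma_\theta(A)$ for $\theta\notin N_A$. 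Taking a countable family $\{A_n\}$ of Borel sets that separates measures on $\R$ (as in the proof of Theorem \ref{TeoMis}) and setting $N=\bigcup_n N_{A_n}$, one obtains a Lebesgue-null set $N\subset\T$ with $\nu^\rho_\theta=\nu^\sigma_\theta$ for every $\theta\notin N$.

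\emph{Stage 2.} For $\theta\notin N$ the equality of these probability measures yields equality of their characteristic functions, $\tr{\rho\,e^{irX_\theta}}=\tr{\sigma\,e^{irX_\theta}}$ for all $r\in\R$. Writing $x=r\cos\theta$, $y=r\sin\theta$ and using $e^{irX_\theta}=e^{i(xX+yP)}$, this says that $g(x,y):=\tr{(\rho-\sigma)\,e^{i(xX+yP)}}$ vanishes on the union of the lines through the origin with angles in $\T\setminus N$. Since $N$ is Lebesgue-null, $\T\setminus N$ is dense, so this union of lines is dense in $\R^2$; on the other hand $g$ is continuous on $\R^2$, because $(x,y)\mapsto e^{i(xX+yP)}$ is strongly continuous, these operators are unitary hence uniformly bounded, and $\rho-\sigma$ is trace class. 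Therefore $g\equiv 0$ on $\R^2$.

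\emph{Stage 3.} It remains to show that a trace-class operator $T$ with $\tr{T\,e^{i(xX+yP)}}=0$ for all $(x,y)$ is zero. Since $T=\rho-\sigma$ is Hilbert--Schmidt it is an integral operator on $\ldue{\R}$ with kernel $K\in\ldue{\R^2}$, and using $[e^{i(xX+yP)}f](z)=e^{i(xy/2+xz)}f(z+y)$ a direct computation gives
\[
\tr{T\,e^{i(xX+yP)}}=\int_\R K\!\left(w+\tfrac{y}{2},\,w-\tfrac{y}{2}\right)e^{ixw}\,\de w ,
\]
which is, up to a constant, the inverse Fourier transform in $w$ of $g_y(w):=K(w+\tfrac{y}{2},w-\tfrac{y}{2})$; by Fubini $g_y\in\ldue{\R}$ for $\de y$-almost every $y$. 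Vanishing of the left-hand side for all $x$ forces $g_y=0$ in $\ldue{\R}$ for a.e.\ $y$, hence $K=0$ a.e.\ after the measure-preserving change of variables $(s,t)=(w+\tfrac{y}{2},w-\tfrac{y}{2})$, so $T=0$ and $\rho=\sigma$. The main obstacle is precisely this last stage, i.e.\ upgrading the almost-everywhere equality $\nu^\rho_\theta=\nu^\sigma_\theta$ to an operator identity: what makes it work is the continuity of $g$ (so that a dense set of good phases is enough) together with the injectivity of the characteristic-function map, which the explicit kernel computation supplies directly, without appealing to the Wigner bijection developed in Section \ref{secRadon}.
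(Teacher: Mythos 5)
Your Stages 1 and 2 follow the same route as the paper: reduce $\mu^\rho=\mu^\sigma$ to $\nu^\rho_\theta=\nu^\sigma_\theta$ for almost every $\theta$, pass to the characteristic functions $\tr{\rho\,e^{irX_\theta}}$, and observe that these determine $\tr{\rho\,e^{i(xX+yP)}}$ on a dense union of lines, whence everywhere by continuity. In fact you make explicit the density-plus-continuity step that the paper's proof leaves implicit, which is a genuine improvement in exposition. The divergence is in Stage 3: the paper concludes by citing the injectivity of the map $V:\trh\frecc C(\R^2)$, $[V(T)](x,y)=\frac{1}{\sqrt{2\pi}}\tr{T e^{i(xX+yP)}}$ (Folland), whereas you try to prove this injectivity from scratch via a kernel computation, and this is where there is a real gap.

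The ``direct computation'' giving $\tr{T\,e^{i(xX+yP)}}=\int_\R K\bigl(w+\tfrac{y}{2},w-\tfrac{y}{2}\bigr)e^{ixw}\,\de w$ is only formal for a general trace-class $T=\rho-\sigma$: its kernel $K$ is merely an element of $\ldue{\R^2}$, defined up to null sets, so the restriction $w\mapsto K(w+\tfrac{y}{2},w-\tfrac{y}{2})$ to a line is not well defined without argument, the integral is not absolutely convergent ($g_y$ is only in $\ldue{\R}$, not $\luno{\R}$, so the right-hand side must be read as an $L^2$ Fourier transform and the identity can only hold for almost every $(x,y)$), and the underlying recipe ``trace $=$ integral of the diagonal of the kernel'' is false for general trace-class integral operators (the diagonal is a null set; the paper itself only uses \eqref{trLK} for Schwartz kernels, via Lemma 10.11 of Ref.~\cite{Knapp}). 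What you actually need is precisely the statement that the continuous function $\frac{1}{\sqrt{2\pi}}\tr{T e^{i(xX+yP)}}$ coincides almost everywhere with the partial Fourier transform of the kernel in the rotated coordinates, i.e.\ with $V(T)$ under the unitary map $V:\HS\frecc\ldue{\R^2}$ --- which is exactly the Folland fact the paper invokes. To make your Stage 3 self-contained you would have to justify the identity properly, e.g.\ by factoring $T=AB$ with $A,B\in\HS$ and computing $\tr{B\,e^{i(xX+yP)}A}$ as an absolutely convergent double integral of the two $L^2$ kernels, then rearranging by Fubini and Plancherel; alternatively, simply cite the unitarity (or injectivity) of $V$ as the paper does. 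With that repair, your argument is complete and agrees with the paper's; as written, the key injectivity step rests on an unjustified formal trace formula.
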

\begin{proof}
If $\rho, \sigma\in\sh$, then $\mu^\rho = \mu^\sigma$ if and only if
$p^\rho = p^\sigma$ (in $\luno{\T\times \R}$), which amounts to say
that $p^\rho_\theta = p^\sigma_\theta$ (in $\luno{\R}$) for
$\frac{\de\theta}{2\pi}$-almost all $\theta$. This is in turn
equivalent to $\nu^\rho_\theta = \nu^\sigma_\theta$ for
$\frac{\de\theta}{2\pi}$-almost all $\theta$. For $r\in\R$ and $\theta \in\T$, we have by spectral theorem 
$$
\int_{\R} e^{irx} \de \nu^\rho_\theta (x) = \int_{\R} e^{irx} \tr{\rho
  \Pi_\theta (\de x)} = \tr{\rho e^{ir X_\theta}} =
\sqrt{2\pi} \, [V(\rho)](r\cos \theta , r\sin\theta) , 
$$
where
$$
[V(\rho)] (x,y) = \frac{1}{\sqrt{2\pi}} \tr{\rho e^{i(xX + yP)}} .
$$
Since the map $V:\trh\frecc C(\R^2)$ is injective (see for example
Ref.~ \cite{Fol}), injectivity of the map $\rho\mapsto \mu^\rho$ follows. 
\end{proof}

\section{The Radon transform of the Wigner function and Radon
  reconstruction formula}\label{secRadon} 

In the previous section, by means of the POVM $E$ defined in Theorem
\ref{PropPOVM} we estabilished a convex injective correspondence
$\rho\mapsto p^\rho$ between states and the set of
probability densities on $\T\times\R$. However, no explicit
formula relating $\rho$ to the function $p^\rho$ was given, due to the
fact that, if $\rho$ does not have a simple expression in terms of the
number basis, the expression $\tr{V(\theta)^\ast \rho V(\theta) \Pi
  (A)}$ can not be explicitly computed.

In this section, we will show that, if the state $\rho$ is
sufficiently regular, $p^\rho$ can indeed be evaluated, being in fact
the Radon transform of the Wigner function $W(\rho)$ of $\rho$. This
is a very well known fact in quantum tomography, going back to the
seminal paper of Vogel and Risken \cite{VR}. However, no attention has never been paid in
the literature to the fact that performing the Radon transform of
$W(\rho)$ makes sense only for a restricted class of states, namely
for those $\rho\in\sh$ such that $W(\rho)\in\luno{\R^2}$. This
constraint becomes even more stringent when one considers the inverse
formula reconstructing $\rho$ (or, better, $W(\rho)$) from its
associated probability density $p^\rho$. We will see that, in order to
derive mathematically consistent formulas both for $p^\rho$ and the
reconstruction of $W(\rho)$, one needs to assume that the state
belongs to the set of  Schwartz functions on $\R^2$. This seems a
rather strong limitation, as the very natural attempt to extend the Radon
transform and Radon reconstruction to the whole set $\sh$ by means of
distribution theory fails in the quantum context (Remark
\ref{RemDistr}). Our main reference to the results below is Ref.~ \cite{Hel}.

If $T\in \trh$, we introduce the bounded continuous function $V(T)$ on $\R^2$, given by
\begin{equation}\label{WW}
[V(T)](x,y) = \frac{1}{\sqrt{2\pi}} \, \tr{T e^{i(xX+yP)}} .
\end{equation}
It is well known (see for example Ref.~ \cite{Fol}) that $V(T) \in
\ldue{\R^2}$, and $V$ uniquely extends to a unitary operator $V : \HS
\frecc \ldue{\R^2}$. The {\em Wigner transform} of $A\in \HS$ is just
(up to a constant) the Fourier transform of $V(A)$, i.~e. 
\begin{equation}\label{W}
W(A) = \frac{1}{\sqrt{2\pi}} \, \ff_2 V(A) ,
\end{equation}
where $\ff_2 = \ff \otimes \ff$ on $\ldue{\R^2} = \ldue{\R} \otimes
\ldue{\R}$, with $\ff$ defined in \eqref{FT}. 

If $f\in \luno{\R^2}$, the {\em Radon transform} of $f$ is the complex
function $Rf\in \luno{\T\times \R}$ given by 
\begin{equation}\label{TdR}
Rf (\theta , r) = \int_{-\infty}^{+\infty} f (r\cos \theta - t\sin \theta , r\sin
\theta + t\cos \theta) \de t 
\end{equation}
$\frac{\de\theta}{2\pi} \, \de r$-almost everywhere. 


We have the following fact.
\begin{proposition}\label{PropTdRdW}
If $W(\rho)\in\luno{\R^2}$, then
\begin{equation}\label{TdRdW}
[RW(\rho)] (\theta,r) = p^\rho (\theta,r) 
\end{equation}
for $\frac{\de\theta}{2\pi} \de r$-almost all $(\theta,r)$.
\end{proposition}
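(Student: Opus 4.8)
The plan is to reduce the identity \eqref{TdRdW} to the characterization of $p^\rho$ via the quadrature distributions $\nu^\rho_\theta$ established in Theorem~\ref{TeoMis}, and then to identify the Radon transform of $W(\rho)$ with those same distributions through the Fourier slice theorem. Concretely, by item~(ii) of Theorem~\ref{TeoMis} it suffices to show that, for $\frac{\de\theta}{2\pi}$-almost all $\theta$, the function $r\mapsto [RW(\rho)](\theta,r)$ is the density of $\nu^\rho_\theta$ with respect to $\de r$; equivalently, since a probability measure on $\R$ with integrable density is determined by its characteristic function, it is enough to prove
\begin{equation}\label{plan-slice}
\int_{\R} e^{irs}\,[RW(\rho)](\theta,r)\,\de r = \int_{\R} e^{irs}\,\de\nu^\rho_\theta(r)
\end{equation}
for all $s\in\R$ and $\frac{\de\theta}{2\pi}$-almost all $\theta$.

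First I would compute the right-hand side of \eqref{plan-slice}. As in the proof of the injectivity corollary, the spectral theorem gives $\int_{\R} e^{irs}\,\de\nu^\rho_\theta(r) = \tr{\rho\, e^{isX_\theta}} = \sqrt{2\pi}\,[V(\rho)](s\cos\theta, s\sin\theta)$, using $X_\theta = \cos\theta\,X + \sin\theta\,P$ and the definition \eqref{WW} of $V(\rho)$. Second, I would compute the left-hand side: since $W(\rho)\in\luno{\R^2}$ by hypothesis, $RW(\rho)$ is well defined by \eqref{TdR} and lies in $\luno{\T\times\R}$, so by Fubini the inner one-dimensional Fourier transform of the Radon transform in the $r$ variable equals the two-dimensional Fourier transform of $W(\rho)$ restricted to the line through the origin in direction $(\cos\theta,\sin\theta)$ — this is the classical Fourier slice (projection) theorem, available from Ref.~\cite{Hel}. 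Combining this with \eqref{W}, which says $W(\rho) = \frac{1}{\sqrt{2\pi}}\,\ff_2 V(\rho)$, the slice of $\ff_2 W(\rho)$ along direction $\theta$ is, up to the normalizing constant, exactly $V(\rho)$ evaluated along that same line, which matches the expression obtained for the right-hand side. This yields \eqref{plan-slice}.

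The main obstacle is bookkeeping of the null sets and of the Fourier-transform normalizations rather than any deep difficulty. One must be careful that \eqref{TdR} only holds almost everywhere, so the Fourier slice identity \eqref{plan-slice} is an identity between $\luno{\R}$-functions of $r$ for a.e.\ fixed $\theta$; to pass from \eqref{plan-slice} to the statement that $RW(\rho)(\theta,\cdot)$ is the density of $\nu^\rho_\theta$ one invokes injectivity of the Fourier transform on finite measures, and then matches against item~(ii) of Theorem~\ref{TeoMis} on a common $\frac{\de\theta}{2\pi}$-null exceptional set. A secondary point requiring care is tracking the constants $1/\sqrt{2\pi}$ appearing in \eqref{FT}, \eqref{WW}, \eqref{W} and in the definition \eqref{TdR} of $R$, so that the two sides genuinely coincide with no spurious factor; this is routine but must be done explicitly. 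Once these are handled, \eqref{TdRdW} follows.
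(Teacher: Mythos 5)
Your proposal is correct and follows essentially the same route as the paper: identify the characteristic function of $\nu^\rho_\theta$ with $\sqrt{2\pi}\,[V(\rho)](s\cos\theta,s\sin\theta)$ via the spectral theorem, identify the Fourier transform in $r$ of $[RW(\rho)](\theta,\cdot)$ with the corresponding slice of $\ff_2^{-1}W(\rho)$, and conclude by injectivity of the Fourier transform together with Theorem~\ref{TeoMis}(ii). The only difference is that you invoke the classical Fourier slice theorem from Ref.~\cite{Hel}, whereas the paper verifies it directly by the polar-coordinate change of variables; this is a presentational rather than a substantive distinction (just be careful that it is $\ff_2^{-1}W(\rho)$, not $\ff_2 W(\rho)$, that matches $\frac{1}{\sqrt{2\pi}}V(\rho)$).
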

\begin{proof}
Let $\gamma : \T \times \R \frecc \R^2$ be the map
$$
\gamma (\theta,r) = \lft r\cos\theta , r\sin\theta \rgt .
$$
We have
$$
[V(\rho)\circ \gamma] (\theta,r) = \frac{1}{\sqrt{2\pi}} \tr{\rho e^{ir  {X}_\theta}} = \frac{1}{\sqrt{2\pi}} \int_{-\infty}^{+\infty} e^{irt} p^\rho_\theta (t) \de t = \left[ \ff^{-1} p^\rho_\theta \right] (r)
$$
by spectral theorem. On the other hand,
\begin{eqnarray*}
&& [\ff_2^{-1} W(\rho)\circ \gamma] (\theta,r) = \frac{1}{2\pi} \int_{-\infty}^{+\infty} \int_{-\infty}^{+\infty} e^{i (xr\cos\theta + yr\sin\theta)} [W(\rho)] (x,y) \de x \de y \\
&& \qquad \qquad = \frac{1}{2\pi} \int_0^{\pi} \int_{-\infty}^{+\infty} e^{i t r (\cos\phi\cos\theta + \sin\phi\sin\theta)} [W(\rho)] (t\cos\phi , t\sin\phi) |t| \de t \frac{\de \phi}{2\pi} \\
&& \qquad \qquad = \frac{1}{2\pi} \int_0^{\pi} \int_{-\infty}^{+\infty} e^{i tr \cos (\phi-\theta)} [W(\rho)] (t\cos\phi , t\sin\phi) |t| \de t \frac{\de \phi}{2\pi} \\
&& \qquad \qquad = \frac{1}{2\pi} \int_0^{\pi} \int_{-\infty}^{+\infty} e^{i tr \cos\phi} [W(\rho)] (t\cos(\phi + \theta) , t\sin(\phi + \theta)) |t| \de t \frac{\de \phi}{2\pi} \\
&& \qquad \qquad = \frac{1}{2\pi} \int_{-\infty}^{+\infty} \int_{-\infty}^{+\infty} e^{i r x} [W(\rho)] (x\cos\theta - y\sin\theta , y\cos\theta + x\sin\theta) \de x \de y \\
&& \qquad \qquad = \frac{1}{2\pi} \int_{-\infty}^{+\infty} e^{i r x} [RW(\rho)](\theta , x) \de x \\
&& \qquad \qquad = \frac{1}{\sqrt{2\pi}} \left[ \ff^{-1} [RW(\rho)](\theta , \cdot) \right] (r) .
\end{eqnarray*}
By injectivity of Fourier transform, the claim then follows by comparison.
\end{proof}
\begin{corollary}\label{CorSuppW}
The support of $W(\rho)$ is an unbounded subset of $\R^2$ for all $\rho\in\sh$.
\end{corollary}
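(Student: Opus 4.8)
The plan is to derive the unboundedness of $\operatorname{supp} W(\rho)$ from Corollary~\ref{suppprho}, using the relation between the Wigner transform and the probability densities $p^\rho$ established in Proposition~\ref{PropTdRdW}. The key point is that the marginal of $W(\rho)$ along any direction $\theta$ is (essentially) $p^\rho_\theta$, and we already know from Corollary~\ref{suppprho} that these marginals cannot be supported in any bounded interval (on average over $\theta$, hence for a positive measure set of $\theta$'s, for every $R$). So a compactly supported $W(\rho)$ would force compactly supported marginals, a contradiction.

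There is, however, a subtlety: Proposition~\ref{PropTdRdW} is only available under the hypothesis $W(\rho)\in\luno{\R^2}$, and for a general $\rho\in\sh$ the Wigner function need not be integrable. So I would split into two cases. If $W(\rho)\notin\luno{\R^2}$, then in particular $W(\rho)$ is not in $\luno{K}$ for $K$ any sufficiently large ball --- but $W(\rho)$ is bounded and continuous (indeed $W(\rho)=\frac{1}{\sqrt{2\pi}}\ff_2 V(\rho)$ with $V(\rho)\in\ldue{\R^2}$, so $W(\rho)\in\ldue{\R^2}$), hence it is integrable on every bounded set; therefore non-integrability of $W(\rho)$ already implies its support is unbounded, and we are done. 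If instead $W(\rho)\in\luno{\R^2}$, then Proposition~\ref{PropTdRdW} applies and $[RW(\rho)](\theta,r)=p^\rho(\theta,r)$ for a.e.\ $(\theta,r)$. Now suppose, for contradiction, that $\operatorname{supp} W(\rho)$ is contained in the ball $\{(x,y) : x^2+y^2\le R^2\}$. Looking at the defining formula \eqref{TdR} for the Radon transform, the integration line through $(\theta,r)$ is $\{(r\cos\theta-t\sin\theta,\ r\sin\theta+t\cos\theta) : t\in\R\}$, whose distance from the origin is exactly $|r|$; hence if $|r|>R$ this line misses $\operatorname{supp} W(\rho)$ entirely and $[RW(\rho)](\theta,r)=0$. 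Consequently $p^\rho(\theta,r)=0$ for a.e.\ $(\theta,r)$ with $|r|>R$, which gives
$$
\int_{\T}\int_{|x|>R} p^\rho(\theta,x)\,\de x\,\frac{\de\theta}{2\pi}=0,
$$
contradicting Corollary~\ref{suppprho}. Therefore $\operatorname{supp} W(\rho)$ is unbounded.

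I expect the only real care needed is in the first case --- making sure that $W(\rho)$ is genuinely a locally integrable function so that "not in $\luno{\R^2}$" does force unbounded support --- and this is immediate from $V(\rho)\in\ldue{\R^2}$ together with $\ff_2$ being unitary on $\ldue{\R^2}$, so $W(\rho)\in\ldue{\R^2}\subset L^1_{\mathrm{loc}}(\R^2)$. The geometric observation about the Radon line being at distance $|r|$ from the origin is the other ingredient, and it is elementary. Everything else is a direct citation of Proposition~\ref{PropTdRdW} and Corollary~\ref{suppprho}.
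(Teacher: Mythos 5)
Your proof is correct and follows essentially the same route as the paper: assume compact support, note that $W(\rho)\in\ldue{\R^2}$ then forces $W(\rho)\in\luno{\R^2}$, apply Proposition~\ref{PropTdRdW}, and contradict Corollary~\ref{suppprho} because the Radon lines with $|r|>R$ miss the support (the paper carries this out via the explicit change of variables in the Radon integral rather than the geometric remark, but it is the same estimate). Your separate case $W(\rho)\notin\luno{\R^2}$ merely makes explicit what the paper uses implicitly, namely that a compactly supported $L^2$ function is automatically integrable.
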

\begin{proof}
Suppose by contradiction that $W(\rho) = 0$ almost everywhere outside the disk $D_R$ of radius $R$ in $\R^2$. Then $W(\rho) \in \luno{\R^2}$, and so $[RW(\rho)] (\theta,r) = p^\rho (\theta,r)$ by the above proposition. We have
\begin{eqnarray*}
\int_{0}^{2\pi} \int_{|r| > R} |RW(\rho) (\theta,r)| \de r \frac{\de\theta}{2\pi} & \leq & \int_{0}^{2\pi} \iint_{\R^2 \setminus D_R} |W(\rho) (r\cos \theta - t\sin \theta , r\sin \theta + t\cos \theta)| \de r \de t \frac{\de\theta}{2\pi} \\
& = & \int_{0}^{2\pi} \iint_{\R^2 \setminus D_R} |W(\rho) (r,t)| \de r \de t \frac{\de\theta}{2\pi}  = 0 ,
\end{eqnarray*}
which contradicts Corollary \ref{suppprho}.
\end{proof}

The first formal derivation of \eqref{TdRdW} is contained in Ref.~ \cite{VR}, without the assumption $W(\rho)\in \luno{\R^2}$. We stress that if $W(\rho)\notin \luno{\R^2}$, then \eqref{TdRdW} does not make sense, and the only possible definition of $p^\rho$ is by means of item 1 in Theorem \ref{TeoMis}.

If we denote by $\shuno$ the subset of states $\rho\in\sh$ such that $W(\rho)\in\luno{\R^2}$, then we have estabilished the following diagram
$$
\xymatrix{ & \shuno \ar_W[dl]\ar^{p^{\cdot}}[dr] & \\ \luno{\R^2} \ar@<0.5ex>[rr]^R & & P(\T\times\R)}
$$

Now we turn to the problem of reconstructing $W(\rho)$ given $p^\rho$. If $W(\rho)\in S(\R^2)$, the space of Schwartz functions on $\R^2$, Radon inversion formula is applicable, and we can obtain $W(\rho)$ from $p^\rho$ in a rather explicit way. Before stating Radon inversion theorem, according to Ref.~ \cite{Hel} we need to introduce the set $S_H (\PP^2)$ of functions $\phi : \T \times \R \frecc \C$ such that
\begin{itemize}
\item[({\rm i})] $\phi\in C^\infty (\T\times \R)$;
\item[({\rm ii})] $\sup_{\theta, \, r} \left| \lft 1+|r|^k \rgt
    \frac{\partial^l}{\partial r^l} \frac{\partial^m}{\partial
      \theta^m} \phi(\theta,r) \right| < \infty$; 
\item[({\rm iii})] $\phi(\theta,r) = \phi(2\pi -\theta,-r)$ for all $\theta,r$;
\item[({\rm iv})] for each $k\in\N$, $\int_{-\infty}^{+\infty} \phi(\theta,r) r^k \de r$ is a
  homogeneous polynomial in $\sin\theta$, $\cos\theta$ of degree $k$. 
\end{itemize}
It is shown in Ref.~ \cite{Hel} that $Rf\in S_H (\PP^2)$ if $f\in S(\R^2)$,
and the map $R: S(\R^2) \frecc S_H (\PP^2)$ is one-to-one and
onto. Thus, in our case $W(\rho) \in S(\R^2)$ is equivalent $p^\rho\in
S_H (\PP^2)$ by Proposition \ref{PropTdRdW}. 

The next theorem is a restatement of Theorem 3.6 in Ref.~ \cite{Hel} (see
also Ref.~ \cite{VR} for a formal derivation of \eqref{ATprho}). We stress
that the hypothesis $W(\rho)\in S(\R^2)$ (or, equivalently, $p^\rho\in
S_H (\PP^2)$) is needed in order to give meaning to
\eqref{DefLambda} and to define the integral in \eqref{backpr}. 
\begin{theorem}
Suppose $W(\rho)\in S(\R^2)$. Then
\begin{equation}\label{ATprho}
W(\rho) = \frac{1}{4\pi^2} R^\# [\Lambda p^\rho]
\end{equation}
where
\begin{equation}\label{DefLambda}
\Lambda p^\rho (\theta,r) = \sqrt{\frac{\pi}{2}} \left[ \ff_t [|t|] \ast p^\rho (\theta,\cdot) \right] (r) = {\rm PV} \left[ \int_{-\infty}^{+\infty} \frac{1}{r-t} \frac{\partial p^\rho (\theta,t)}{\partial t} \de t \right]
\end{equation}
and
\begin{equation}\label{backpr}
R^\# f (x,y) = \int_0^{2\pi} f(\theta , x\cos\theta + y\sin\theta) \frac{\de\theta}{2\pi} \quad \forall f\in C^\infty (\T \times \R)
\end{equation}
(in \eqref{DefLambda}, the Fourier transform of $|t|$ and the convolution are interpreted in the sense of tempered distributions, and ${\rm PV}$ is the Cauchy principal value of the integral).
\end{theorem}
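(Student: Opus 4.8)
The plan is to deduce \eqref{ATprho} from the classical Radon inversion theorem in $\R^2$, namely Theorem 3.6 of Ref.~\cite{Hel}, applied to $f=W(\rho)$; the substance lies in the identification $RW(\rho)=p^\rho$ and in matching normalization conventions. First I would observe that $S(\R^2)\subset\luno{\R^2}$, so Proposition~\ref{PropTdRdW} applies and gives $[RW(\rho)](\theta,r)=p^\rho(\theta,r)$ for $\frac{\de\theta}{2\pi}\,\de r$-almost every $(\theta,r)$. Since $W(\rho)\in S(\R^2)$, the cited result of \cite{Hel} gives $RW(\rho)\in S_H(\PP^2)$; hence the a.e.-defined density $p^\rho$ admits a representative in $S_H(\PP^2)$, for which both $\Lambda p^\rho$ of \eqref{DefLambda} and the back-projection $R^\#[\Lambda p^\rho]$ of \eqref{backpr} are well defined. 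From now on $p^\rho$ denotes this smooth representative, identified with $RW(\rho)$.

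Next I would invoke Helgason's inversion theorem for $n=2$: every $f\in S(\R^2)$ is reconstructed from its Radon transform by filtering with the one-dimensional operator $\Lambda$ (the Fourier multiplier $t\mapsto|t|$ acting on the affine variable $r$) followed by the dual transform, up to an explicit positive constant. The real work is the translation of conventions: (a) Helgason parametrizes lines by unit covectors $(\cos\theta,\sin\theta)$ together with the affine parameter, so his $\hat f$ coincides with our $Rf$ of \eqref{TdR}; (b) one must compare his dual transform, built from the rotation-invariant measure on the lines through a point, with our $R^\#$ of \eqref{backpr}, which carries the normalization $\frac{\de\theta}{2\pi}$; (c) one must verify that his filtering operator agrees --- constant included --- with the $\Lambda$ of \eqref{DefLambda}, which in dimension two is non-local; and (d) one must track the powers of $2\pi$ produced by the several Fourier conventions involved (the paper's $\ff$, Helgason's, and the polar-coordinate computation in the proof of Proposition~\ref{PropTdRdW}). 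Collecting these adjustments fixes the overall constant at $\frac{1}{4\pi^2}$, and combining with $p^\rho=RW(\rho)$ yields \eqref{ATprho}. This bookkeeping --- landing on exactly $\frac{1}{4\pi^2}$ --- is the only genuinely delicate step of the proof.

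It remains to reconcile the two expressions for $\Lambda p^\rho$ in \eqref{DefLambda}. Writing $|t|=t\,\sgn t$ and recalling that, under $\ff$, multiplication by $t$ corresponds (up to a constant) to $\partial_r$ and multiplication by $\sgn t$ to the Hilbert transform --- whose kernel is $\,{\rm PV}\,\tfrac{1}{\pi r}$ --- one sees that $\sqrt{\pi/2}\,\ff_t[|t|]$ is, as a tempered distribution, a constant multiple of the Hadamard finite part of $r^{-2}$, and that $\Lambda$ equals, up to a constant, the composition of $\partial_r$ with the Hilbert transform. Since $p^\rho(\theta,\cdot)\in S(\R)$, an integration by parts then turns $\sqrt{\pi/2}\,\bigl[\ff_t[|t|]\ast p^\rho(\theta,\cdot)\bigr](r)$ into $\,{\rm PV}\int_{-\infty}^{+\infty}\frac{1}{r-t}\,\frac{\partial p^\rho(\theta,t)}{\partial t}\,\de t$, which is the second form in \eqref{DefLambda}. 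All of this is standard distribution theory; only the constant-tracking of the previous paragraph requires care.
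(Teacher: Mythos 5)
Your proposal is correct and follows essentially the same route as the paper, which offers no independent argument but presents the result as a restatement of Theorem 3.6 of Helgason, combined with the identification $RW(\rho)=p^\rho$ from Proposition~\ref{PropTdRdW} (applicable since $S(\R^2)\subset\luno{\R^2}$) and the fact that $R$ maps $S(\R^2)$ onto $S_H(\PP^2)$. The convention and constant bookkeeping you flag as the delicate step is precisely what the paper delegates to the cited theorem, and your reconciliation of the two expressions for $\Lambda$ via $\ff_t[|t|]$ being a multiple of the finite part of $r^{-2}$ is the standard and correct computation.
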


We devote the rest of this section to find the subset of states $\rho\in\sh$ such that $W(\rho)\in S(\R^2)$, i.~e.~to which both Radon transform \eqref{TdRdW} and Radon reconstruction formula \eqref{ATprho} are applicable.

Each $T\in\trh$, being a Hilbert-Schmidt operator on $\ldue{\R}$, is an integral operator, whose kernel $K_T$ is in $\ldue{\R^2}$. We have the following fact.
\begin{proposition}
Suppose $K\in S(\R^2)$. Then the integral operator $L_K$ with kernel $K$ is in $\trh$, and its trace is
\begin{equation}\label{trLK}
\tr{L_K} = \int_{-\infty}^{+\infty} K(x,x) \de x .
\end{equation}
Moreover, $L_K \in \sh$ if and only if $K$ is positive semidefinite\footnote{We recall that a function $K:\R^2 \frecc \C$ is {\em positive semidefinite} if $\sum_{i,j = 1}^N c_i \overline{c_j} K(x_j , x_i) \geq 0$ for all $N\in\N$, ${c_1 , c_2 \ldots c_N} \subset \C$ and ${x_1 , x_2 \ldots x_N} \subset \R$.} 
and $\int_{-\infty}^{+\infty} K(x,x) \de x = 1$.
\end{proposition}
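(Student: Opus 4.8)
The plan is to reduce everything to the number basis $\{e_n\}_{n\in\N}$ of $\ldue{\R}$, using the standard structural fact that a Schwartz function on $\R^2$ is exactly one whose coefficients with respect to the orthonormal basis $\{e_m\otimes e_n\}_{m,n\in\N}$ of $\ldue{\R^2}$ form a rapidly decreasing double sequence, and that the corresponding Hermite expansion converges uniformly (see \cite{Fol}). Throughout, $L_K$ is understood a priori as the Hilbert--Schmidt operator with kernel $K\in S(\R^2)\subset\ldue{\R^2}$.

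First I would prove that $L_K\in\trh$ together with \eqref{trLK}. Write $K=\sum_{m,n}c_{mn}\, e_m\otimes e_n$ with $c_{mn}=\scal{e_m\otimes e_n}{K}$ rapidly decreasing, so that $\sum_{m,n}|c_{mn}|<\infty$. Since $\no{\kb{e_m}{e_n}}_1=1$, the series $\sum_{m,n}c_{mn}\kb{e_m}{e_n}$ is absolutely convergent in $\trh$; the integral kernel of $\kb{e_m}{e_n}$ is $e_m(x)e_n(y)$ (the number basis being real-valued), so the sum of this trace-class series has kernel $\sum_{m,n}c_{mn}e_m(x)e_n(y)=K(x,y)$, whence it coincides with $L_K$ and $L_K\in\trh$. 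Then $\tr{L_K}=\sum_m\scal{e_m}{L_K e_m}=\sum_m c_{mm}$, while, using that the expansion $\sum_{m,n}c_{mn}e_m(x)e_n(y)$ of the Schwartz function $K$ converges uniformly, I would put $y=x$ and integrate term by term --- the interchange of sum and integral being justified by $\sum_{m,n}|c_{mn}|\int_\R|e_m(x)e_n(x)|\,\de x\le\sum_{m,n}|c_{mn}|<\infty$ (Cauchy--Schwarz and $\no{e_n}_2=1$) --- to obtain $\int_\R K(x,x)\,\de x=\sum_{m,n}c_{mn}\delta_{mn}=\sum_m c_{mm}$. This gives \eqref{trLK}.

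Next I would treat the characterization of states. Given \eqref{trLK}, the condition $\tr{L_K}=1$ is literally $\int_\R K(x,x)\,\de x=1$, so it only remains to show $L_K\ge 0$ if and only if $K$ is positive semidefinite. For the ``if'' direction: taking $N=2$ in the definition forces $K(y,x)=\overline{K(x,y)}$, hence $L_K=L_K^\ast$; and for $f$ continuous with compact support, $\scal{f}{L_K f}=\iint\overline{f(x)}K(x,y)f(y)\,\de x\,\de y$ is a limit of Riemann sums $\sum_{i,j}\overline{f(x_i)}\,K(x_i,x_j)\,f(x_j)\,h^2$, each of which is $\ge 0$ by positive semidefiniteness of $K$; density of such $f$ in $\ldue{\R}$ together with boundedness of $L_K$ then yields $L_K\ge 0$. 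For the ``only if'' direction: fix $x_1,\dots,x_N\in\R$ and $c_1,\dots,c_N\in\C$, let $g_\varepsilon^{y}$ be a nonnegative approximate identity concentrated at $y$, and apply $L_K\ge 0$ to $f_\varepsilon=\sum_k c_k\, g_\varepsilon^{x_k}$; since $\scal{g_\varepsilon^{x_k}}{L_K g_\varepsilon^{x_l}}\to K(x_k,x_l)$ as $\varepsilon\to 0$ by continuity of $K$, letting $\varepsilon\to 0$ gives $\sum_{i,j}c_i\overline{c_j}K(x_j,x_i)\ge 0$, i.e. $K$ is positive semidefinite.

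The only genuinely nontrivial ingredient is the one invoked at the very beginning: that a Schwartz function on $\R^2$ has rapidly decreasing Hermite coefficients and a uniformly convergent Hermite expansion. This is a standard fact about the Schwartz space --- it follows from identifying $S(\R^2)$ with the intersection of the domains of the powers of the two-dimensional number operator --- and, once it is granted, the remaining work is just bookkeeping with a trace-norm convergent series and with Riemann sums; I do not expect any hidden difficulty there.
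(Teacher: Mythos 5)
Your proof is correct, but it reaches trace-classness and the trace formula by a genuinely different route than the paper. You expand $K$ in the number (Hermite) basis, use the standard characterization of $S(\R^2)$ by rapidly decreasing coefficients (only the implication $K\in S(\R^2)\Rightarrow$ rapid decay is actually needed, and it follows by pairing with powers of the harmonic oscillator), and realize $L_K$ as the trace-norm absolutely convergent series $\sum_{m,n}c_{mn}\kb{e_m}{e_n}$; the trace formula then follows by term-by-term evaluation, and you correctly invoke uniform convergence so that the series represents $K$ pointwise on the diagonal --- a Lebesgue-null subset of $\R^2$, so mere $L^2$ convergence would not suffice, and this is exactly the point where care is required. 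The paper instead conjugates $L_K$ by the unitary $\Phi f(y)=(1+\tan^2 y)^{1/2}f(\tan y)$ onto a bounded interval, observes that the transported kernel extends smoothly by zero to the closed square because $K$ is Schwartz, and invokes Knapp's Lemma 10.11 (a smooth kernel on a compact domain gives a trace-class operator whose trace is the diagonal integral). Your approach stays entirely within the number-basis framework already used elsewhere in the paper and avoids both the change of variables and the external Mercer-type lemma, at the price of the Hermite characterization of Schwartz space; both routes are legitimate. For the positivity characterization, your ``only if'' direction (approximate identities translated to the points $x_k$) coincides with the paper's Dirac-sequence argument, while you additionally spell out the ``if'' direction via Riemann sums of the continuous, compactly supported integrand $\overline{f(x)}K(x,y)f(y)$, which the paper dismisses as ``easy to check''; that argument is sound, since $\scal{f}{L_K f}\ge 0$ on the dense subspace of continuous compactly supported $f$, together with boundedness of $L_K$, yields $L_K\ge 0$ (self-adjointness being automatic on a complex Hilbert space).
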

\begin{proof}
Let $I = (-\pi , \pi)$, and let $\Phi : \ldue{\R} \frecc \ldue{I}$ be the following unitary operator
$$
\Phi f (y) = (1+\tan^2 y)^{1/2} f (\tan y) .
$$
$\Phi$ intertwines $L_K$ with the integral operator $L_{\tilde{K}}$ on $\ldue{I}$ with kernel
$$
\tilde{K} (y_1 , y_2) = (1+\tan^2 y_1)^{1/2} K (\tan y_1 , \tan y_2) (1+\tan^2 y_2)^{1/2} \quad y_1, y_2 \in (-\pi, \pi) .
$$
Since $\tilde{K}$ extends to a $C^\infty$-function on
$\overline{I\times I}$ by setting $\tilde{K} = 0$ in the frontier of
$\overline{I\times I}$, by Lemma 10.11 in Ref.~ \cite{Knapp} $L_{\tilde{K}}$
is a trace class operator on $\ldue{I}$, whose trace is given by 
$$
\tr{L_{\tilde{K}}} = \int_{-\pi}^{\pi} \tilde{K} (y,y) \de y = \int_{-\infty}^{+\infty} K (x,x) \de x .
$$
Since $L_K = \Phi^{-1} L_{\tilde{K}} \Phi$, eq.~\eqref{trLK} follows. 

It is easy to check that, if $K$ is positive semidefinite, then the integral operator $L_K$ is positive. Conversely, suppose $L_K$ is a positive operator. Fix a Dirac sequence $\{ f_n \}_{n\in\N}$, and let $g_n = \sum_{i=1}^N c_i f_n^{x_i}$, where $f_n^{x_i} (x) = f_n (x-x_i)$. We have
\begin{eqnarray*}
0 & \leq & \scal{g_n}{L_K g_n} = \sum_{i,j=1}^N c_i \overline{c_j} \int_{-\infty}^{+\infty} \int_{-\infty}^{+\infty} \overline{f_n (x-x_j)} K(x,y) f_n (y-x_i) \de x \de y \\ && \mathop{\frecc}_{n\to\infty} \, \sum_{i,j=1}^N c_i \overline{c_j} K(x_j,x_i) ,
\end{eqnarray*}
from which positive definiteness of $K$ follows. The last claim in the statement is thus clear.
\end{proof}

We introduce the following linear subspace of $\trh$
$$
\trhs = \left\{ T\in\trh \mid K_T \in S(\R^2) \right\} ,
$$
and define
$$
\shs = \sh \cap \trhs .
$$
If $T\in\trhs$, we can explicitly evaluate the trace in \eqref{WW} and the Fourier transform in \eqref{W} defining $V(T)$ and $W(T)$ respectively. We find
\begin{gather*}
[V(T)](x,y) = \ff_t^{-1}\left[ K_T ( t + y/2 , t - y/2) \right] (x) \\
[W(T)](x,y) = \frac{1}{\sqrt{2\pi}} \ff_t\left[ K_T ( x + t/2 , x - t/2) \right] (y) ,
\end{gather*}
where we denoted by $\ff_t$ the Fourier transform with respect to the variable $t$.
The second formula proves the next proposition.
\begin{proposition}
$W : \trhs \frecc S(\R^2)$ is a bijection.
\end{proposition}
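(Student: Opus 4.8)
\emph{Proof proposal.} The plan is to realize the restriction $W|_{\trhs}$ as a composition of three elementary bijections onto $S(\R^2)$, using the explicit formula
$$[W(T)](x,y) = \frac{1}{\sqrt{2\pi}} \ff_t\left[ K_T ( x + t/2 , x - t/2) \right] (y)$$
displayed just above the statement together with the preceding proposition.

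First I would record that the assignment $T\mapsto K_T$ is a bijection from $\trhs$ onto $S(\R^2)$. It is injective because a Hilbert--Schmidt operator is uniquely determined by its integral kernel, and it is onto $S(\R^2)$ because, by the previous proposition, every $K\in S(\R^2)$ is the kernel of an operator $L_K\in\trh$, and then $K_{L_K}=K\in S(\R^2)$ forces $L_K\in\trhs$. Next, consider the linear change of variables $\Psi$ sending $K\in S(\R^2)$ to the function $(\Psi K)(x,t)=K(x+t/2,\,x-t/2)$. Since $(x,t)\mapsto(x+t/2,\,x-t/2)$ is an invertible linear map of $\R^2$, with inverse $(u,v)\mapsto((u+v)/2,\,u-v)$, $\Psi$ is a bijection of $S(\R^2)$ onto itself. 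Finally, the normalized partial Fourier transform $\mathcal G:h\mapsto \frac{1}{\sqrt{2\pi}}\,\ff_t[h(x,\cdot)]$ (Fourier transform in the second variable) is a bijection of $S(\R^2)$ onto itself, because $\ff$ preserves $S(\R)$ and the transform depends smoothly on the remaining variable.

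The displayed formula then says exactly that $W|_{\trhs}=\mathcal G\circ\Psi\circ(T\mapsto K_T)$, a composition of bijections onto $S(\R^2)$; hence $W|_{\trhs}:\trhs\frecc S(\R^2)$ is a bijection. (Injectivity is in any case already guaranteed by the fact that $W:\HS\frecc\ldue{\R^2}$ is unitary; for surjectivity one checks that, given $g\in S(\R^2)$, the operator $L_K$ with $K=\Psi^{-1}(\mathcal G^{-1}g)\in S(\R^2)$ lies in $\trhs$ and satisfies $W(L_K)=g$.)

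I do not expect any genuine obstacle here: the argument is essentially bookkeeping once the explicit kernel formula for $W(T)$ and the previous proposition are granted. The only points deserving a line of care are the standard facts that an invertible linear change of variables and a partial Fourier transform both preserve the Schwartz class, and tracing through the inverse change of variables to confirm that the surjectivity argument produces an operator whose kernel is again Schwartz, so that it lands in $\trhs$ and not merely in $\HS$.
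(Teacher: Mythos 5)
Your proposal is correct and is essentially the paper's own argument: the paper simply remarks that the displayed formula $[W(T)](x,y)=\frac{1}{\sqrt{2\pi}}\ff_t[K_T(x+t/2,x-t/2)](y)$ proves the proposition, which amounts exactly to your factorization of $W|_{\trhs}$ into the kernel map (a bijection onto $S(\R^2)$ by the preceding proposition), an invertible linear change of variables, and a partial Fourier transform, each preserving the Schwartz class. You have merely spelled out the bookkeeping the paper leaves implicit.
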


Restricting to states in $\shs$, we have thus arrived at the following diagram.
$$
\xymatrix{ & \shs \ar_W[dl]\ar^{p^{\cdot}}[dr] & \\ S(\R^2) \ar@<0.5ex>[rr]^R & & S_H (\PP^2) \ar@<0.5ex>[ll]^{\frac{1}{4\pi^2} R^\# \Lambda}}
$$
\begin{remark}\label{RemDistr}
Unfortunately, one can not use the definition of Radon transform of distributions to extend \eqref{TdRdW} to whole $\ldue{\R^2}$, or reconstruction formula \eqref{ATprho} to a larger set than $\shs$. In fact, as explained in \S 5 of Ref.~ \cite{Hel}, the distributional Radon transform can be defined only as a map $R : \ee^\prime (\T \times \R) \frecc \ee^\prime (\T \times \R)$, $\ee^\prime (\T \times \R)$ being the set of compactly supported distributions on $\T \times \R$. Corollary \ref{CorSuppW} then prevents us from giving any distributional sense to \eqref{TdRdW}. Similarly, eq.~\eqref{ATprho} has no distributional analogue, as the reconstruction formula $T = \frac{1}{4\pi^2} R^\# [\Lambda RT]$ (Theorem 5.5 in Ref.~ \cite{Hel}) again holds only for compactly supported distributions $T$.
\end{remark}
\begin{remark}
Being able to exhibit an explicit inversion formula of the Radon transform only for Wigner functions which are Schwartz class does \emph{not} imply 
a failure of quantum tomographical methods in reconstructing states with weaker regularity properties, as the associated POVM remains informationally complete 
on the whole of $\sh$, as we have shown in the first part of this paper. In fact, mainly in order to address issues of numerical stability, actual reconstruction methods usually do not involve 
$\frac{1}{4\pi^2} R^\# \Lambda$ directly, but some approximated technique involving regularizations; proofs of consistency are available \cite{Art} for some of these regularized
estimators which holds on the whole of quantum state space. 
\end{remark}

\end{document}